\documentclass{article}

\usepackage[english]{babel}
\usepackage[final]{gpsmdms_2022}
\usepackage[dvipsnames]{xcolor}
\usepackage{tikz}
\usetikzlibrary{backgrounds}
\usetikzlibrary{arrows,shapes}
\usetikzlibrary{tikzmark}
\usetikzlibrary{calc}

\usepackage{amsmath}
\usepackage{amsthm}
\usepackage{amssymb}
\usepackage{mathtools, nccmath}
\usepackage{wrapfig}
\usepackage{comment}
\usepackage{graphicx} 
\usepackage{subcaption}

\usepackage{blindtext}

\usepackage{xspace}

\usepackage{array}
\usepackage{ragged2e}
\newcolumntype{P}[1]{>{\RaggedRight\hspace{0pt}}p{#1}}
\newcolumntype{X}[1]{>{\RaggedRight\hspace*{0pt}}p{#1}}

\usepackage{tcolorbox}

\usepackage{tikz}
\usetikzlibrary{arrows,shapes,positioning,shadows,trees,mindmap}
\usepackage[edges]{forest}
\usetikzlibrary{arrows.meta}
\colorlet{linecol}{black!75}
\usepackage{xkcdcolors} 

\usepackage{tikz}

\usepackage{amsmath}
\usepackage{graphicx}
\usepackage{amssymb}
\usepackage{amsthm}

\usepackage[dvipsnames]{xcolor}
\usepackage{natbib}
\bibliographystyle{unsrtnat}
\setcitestyle{numbers,open={[},close={]}} 

\usepackage{mathtools, nccmath}
\usepackage{wrapfig}
\usepackage{comment}
\usepackage{algorithm}
\usepackage{algpseudocode}
\usepackage{blindtext}

\usepackage[colorlinks=true, allcolors=blue]{hyperref}
\newcommand{\comp}{\tau}
\newcommand{\s}{\{\theta_i,\comp_i\}}

\newtheorem{definition}{Definition} 
\newtheorem{proposition}{Proposition}
\newtheorem{theorem}{Theorem}

\usetikzlibrary{backgrounds}
\usetikzlibrary{arrows,shapes}
\usetikzlibrary{tikzmark}
\usetikzlibrary{calc}
\newcommand{\highlight}[2]{\colorbox{#1!17}{$\displaystyle #2$}}

\colorlet{mhpurple}{Plum!80}

\renewcommand{\highlight}[2]{\colorbox{#1!17}{#2}}


\author{%
  Pierre Thodoroff \\
  University of Cambridge	\\
  \texttt{pt440@cam.ac.uk} \\
  \And
  Markus Kaiser\\
  University of Cambridge	\\
  British Antarctic Survey\\
  \texttt{mk2092@cam.ac.uk} \\
    \And
  Rosie Williams\\
  British Antarctic Survey\\
  \texttt{chll1@bas.ac.uk	} \\
    \And
  Robert Arthern \\
  British Antarctic Survey\\
  \texttt{rart@bas.ac.uk}
    \And
  Scott Hosking \\
  British Antarctic Survey\\
  The Alan Turing Institute\\
  \texttt{jask@bas.ac.uk	}
  \And
  Neil Lawrence \\
  University of Cambridge	\\
  The Alan Turing Institute\\
  \texttt{ndl21@cam.ac.uk	}
  \And
  James Byrne \\
  British Antarctic Survey\\
  \texttt{jambyr@bas.ac.uk}
  \And
  Ieva Kazlauskaite \\
  University of Cambridge	\\
  British Antarctic Survey\\
  \texttt{ik394@cam.ac.uk	}
}

\DeclareMathOperator{\expect}{\mathbb{E}}
\newif\ifdraft
\draftfalse 

\title{Multi-fidelity experimental design for ice-sheet simulation}

\begin{document}
\maketitle
\section{Introduction}

Computer simulations are an essential tool in many scientific fields from molecular dynamics~\citep{hollingsworth2018molecular} to aeronautics~\citep{quagliarella2020open}. In glaciology, future predictions of sea level change require input from ice sheet models, which represent flow of the ice, mass changes due to accumulation of snow on the ice surface, and loss of ice through melting under floating ice shelves and iceberg calving. Ideally, these models need to be run over large geographical areas such as the Antarctic Ice Sheet and at sufficiently high resolution to accurately represent the dynamics of the ice flow ~\citep{pattyn2012results,pattyn2013grounding}. Moreover, due to uncertainties in the forcings and the parameter choices for such models, many different realisations of the model are needed to capture uncertainty in sea level contributions (SLC)~\citep{nias2019assessing}. For these reasons, producing robust probabilistic forecasts from an ensemble of model simulations over regions of interest can be extremely expensive for many ice sheet models and take several weeks. 



At the core of many numerical simulations lie differential equations approximated by numerical solvers that discretize the domain in time or space. While the accuracy of these solvers improves when step size is reduced, practitioners need to choose a size that is computationally feasible. The computational costs of running a simulator are intrinsically related to the accuracy of the model, leading to the question of finding the optimal trade-off between costs and accuracy. Multi-fidelity experimental design (MFED) is a strategy that models the high-fidelity output of a simulator by combining information from various resolutions in an attempt to minimize the computational costs of the process and maximize the accuracy of the posterior~\citep{stroh2022sequential,jakeman2022adaptive,gong2022multi}. Intuitively, for some problems, running a simulator with a low resolution might be sufficient to characterize its behaviour since the underlying dynamics might not involve high-frequency effects. 

In this paper, we present an application of MFED to an ice-sheet simulator~\citep{bradley2021wavi} and demonstrate potential computational savings by modelling the relationship between spatial resolutions. We analyze the regret of MFED strategies using theoretical results from sub-modular maximization and propose a new algorithm based on ideas from online learning (UCB) to efficiently optimize the exploration-exploitation trade-off along the computational axis.

\begin{figure}[htb]
    \centering 
\begin{subfigure}{0.3\textwidth}
  \includegraphics[width=\linewidth,height=2.9cm]{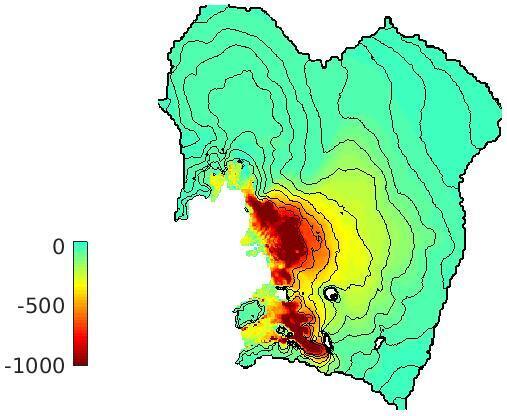}
  \label{fig:1}
\end{subfigure}\hfil 
\begin{subfigure}{0.3\textwidth}
  \includegraphics[width=\linewidth,height=2.9cm]{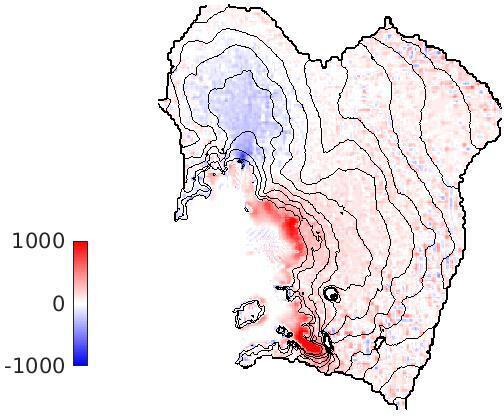}
  \label{fig:2}
\end{subfigure}\hfil 
\caption{Simulation results using the WAVI ice sheet model over the Amundsen Sea Sector of the West Antarctic Ice Sheet (details in Appendix \ref{app:plot}).  Left: modelled ice thickness change (in metres) over 100 years of a WAVI simulation at 3 km resolution. Right:  the difference in modelled ice thickness at $t=100$ years between a 10 km and a 3 km simulation.  }
\label{fig:ice}
\end{figure}



\ifdraft
\clearpage
\fi
\section{Method}\label{sec:methods}
The goal of MFED is to model the behavior of a simulator  $f: \theta \times \comp \rightarrow \mathbb{R}^D$ in a cost-efficient manner such that $\sum_i c(\theta_i,\tau_i) \leq C$, where $\theta \in \mathbb{R}^N$ represents the parameters studied (e.g.\ temperature), $\comp \in  [0,\infty)^M$ the fidelity of the simulator ($0$ representing the highest fidelity), $c$ the non-negative cost-function of the simulator and $C$ the total computational budget. In the context of the ice-sheet simulator, the resolution $\tau$ represents the spatial resolution used (1-10km). The fidelity parameter balances the accuracy of the simulator with the computational costs (discretization factor). The Bayesian approach to experimental design is to infer a surrogate model of a simulator $f$ and then select points such as to minimize the model's uncertainty over the posterior distribution~\citep{chaloner1995bayesian}. There exist two important elements to define in the strategy: the model and the objective. 

\paragraph{Model} A commonly used model in MFED is to define a probability distribution over a space of functions distributed according to a Gaussian process such that $p(f) = GP(\mu(\cdot);k(\cdot,\cdot))$ where $\mu(\cdot)$ is the mean function and $k(\cdot,\cdot)$ is the covariance function. The computational axis $\comp$ can be modeled as a discrete set of fidelities~\citep{stroh2022sequential} or as a continuous one. In the discrete case, one strategy is to find a stationary relationship between the fidelities such as  $f(\theta,0)=\text{sin}(f(\theta,1)) + f(\theta,2)$. However, in the context of physical simulators, it seems unlikely that there exists such stationary relationships between fidelities. Furthermore, the number of models would explode due to the continuous nature of the fidelities. By modelling $f$ as a joint continuous space, the relationship between points of varying fidelities can be obtained through the covariance matrix. We consider kernels of the exponential quadratic families parameterized by a lengthscale $\beta$. 


\paragraph{Cost-adjusted utility} In experimental design, the aim is to reduce the uncertainty over our posterior distribution. In this work, we consider maximizing the reduction in conditional integrated variance (CIV)~\citep{gorodetsky2016mercer} such that the utility of adding a point $x_i = \{\theta_i,\tau_i\}$ is given by

\begin{equation}
	 U(x,X) =  \int \sigma_{X} (\theta,\comp=0) d\theta - \int \sigma_{x \cup X} (\theta,\comp=0) d\theta.
\end{equation}
where $\sigma_X$ is the variance of a GP fitted to the set of points $X$. The conditional on $\tau=0$ is due to the fact that we only care about the uncertainty over the highest fidelity. The aim of MFED is to reduce the uncertainty over the posterior while minimizing the computational budget used ($\sum_i c(x_i)$). Optimally reasoning about the budget allocation is an NP-hard combinatorial problem~\citep{martello1990knapsack} as illustrated by the knapsack problem which consists of selecting items with assigned weights and value such as to maximize the cumulative value under a total weight constraint. In special cases (discussed in section \ref{sec:theory}), a reasonable approximation can be obtained using a cost-adjusted strategy such that each point is selected to maximize the cost-utility ratio $\frac{U(x)}{c(x)}$. This strategy was proposed in~\citep{stroh2022sequential} and the related work are discussed in Appendix \ref{app:rel}. 



\paragraph{Uncertainty over objective}
One challenge is that CIV is dependent on the hyper-parameters of the GP and is unknown a priori. In particular, if we consider kernels from the exponential quadratic family, the lengthscale over the computational axis will have a large effect on the utility. Selecting a new observation that would maximize the CIV is thus challenging as the true lengthscale $\beta^*$ is unknown. To alleviate this issue, we learn a posterior distribution over $\beta$ at each time step and choose a specific $\beta$ based on an upper-confidence bound parameter $\nu$. In Section~\ref{sec:theory}, we discuss the role of $\nu$ in trading-off exploration and exploitation along the computational axis.

\begin{algorithm}[h]
\caption{MF-UCB}\label{alg:cap}
\begin{algorithmic}
\Require Confidence parameter $\nu$, prior over lengthscale $\widetilde{p}(\beta)$, budget C
\While{$C>0$}
    \State Select $\beta_i$ such that $\widetilde{p}(\beta \geq \beta_i) \leq \nu$
    \State $x_i \gets \text{argmax}_{x} \frac{U_{\beta_i}(x)}{c(x)}$
    \State Fit GP model with posterior distribution $\widetilde{p}(\beta)$ on $X \cup x_i$.
    \State $C \gets C-c(x_i)$
\EndWhile
\end{algorithmic}
\end{algorithm}


\section{Algorithm analysis}
\label{sec:theory}

In this section, we analyze the performance of the cost-adjusted strategy from Section \ref{sec:methods}. First, we discuss the regret incurred by using the cost-adjusted greedy approximation. Then we analyze the \emph{identification error} that captures the uncertainty existing over the optimal lengthscale. We showcase the existing underlying exploration-exploitation trade-off and its relationship with the confidence parameter $\nu$ introduced in Section~\ref{sec:methods}.

MFED can be analyzed as a set-maximization problem if the input space is discretized. Formally, the goal is to find a set of points $\{x_i\} \in X$ that maximizes a utility set function $F: 2^X \to \mathbb{R}$ while satisfying a cost constraint $\sum_{i} c(x_i) \leq C$. As a reminder, the algorithms described in Section~\ref{sec:methods} selects greedily the next point by maximizing the cost-adjusted variance reduction to form the set $X^G$. However, due to the uncertainty over the lengthscale, the algorithm cannot select the optimal sample $x_i^G$ with respect to the greedy strategy and we call this cost-normalized error $\epsilon$ (definition~\ref{def:add}).

%




\begin{theorem}[Theorem 6 \citep{streeter2008online}]\label{theorem:greedy}
   Let $X^G$ be the greedy set, $X^*$ the optimal set maximizing the monotone sub-modular function $F$, $\epsilon(x)$ the cost-normalized error (\ref{def:add}) and $c(x)$ the cost function. Then,
	\begin{equation}
		F(X^G) > \tikzmarknode{x}{\highlight{red}{$(1-e^{-1})$}} F(X^*) - \tikzmarknode{s}{\highlight{blue}{$\sum_{i=1}^L \epsilon_i(x^G_i) c(x^G_i)$}}
	\end{equation}
    \begin{tikzpicture}[overlay,remember picture,>=stealth,nodes={align=left,inner ysep=1pt},<-]
        \path (x.south) ++ (0,-0.5em) node[anchor=north east,color=red!67] (scalep){\textbf{greedy approximation error}};
        \draw [color=red!87](x.south) |- ([xshift=-0.3ex,color=red]scalep.south west);
        \path (s.south) ++ (0,-0.5em) node[anchor=north west,color=blue!67] (mean){\textbf{identification error}};
        \draw [color=blue!57](s.south) |- ([xshift=-0.3ex,color=blue]mean.south east);
    \end{tikzpicture}
\end{theorem}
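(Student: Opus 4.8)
The plan is to mimic the classical $(1-e^{-1})$ analysis of the cost--benefit greedy algorithm for budgeted monotone submodular maximization, and then to track how the per-step identification error $\epsilon_i$ propagates through the resulting recursion. Throughout I write $\Delta(x \mid S) = F(S \cup \{x\}) - F(S)$ for the marginal gain, $X_i^G$ for the greedy set after $i$ steps, $c_i = c(x_i^G)$ for the cost of the $i$-th selected point, and $\delta_i = F(X^*) - F(X_i^G)$ for the optimality gap. Assuming $F(\emptyset) = 0$, the object to control is $\delta_L$, and the goal reduces to showing $\delta_L \leq e^{-1} F(X^*) + \sum_{i} \epsilon_i(x_i^G)\, c_i$.

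The core step is a per-iteration contraction on $\delta_i$. Fixing step $i$ and the current set $X_{i-1}^G$, monotonicity and submodularity give $F(X^*) - F(X_{i-1}^G) \leq \sum_{x \in X^* \setminus X_{i-1}^G} \Delta(x \mid X_{i-1}^G)$; bounding each marginal gain by the best cost-normalized ratio $r_i^{*} = \max_x \Delta(x \mid X_{i-1}^G)/c(x)$ times its cost, and invoking budget feasibility $\sum_{x \in X^*} c(x) \leq C$, yields $\delta_{i-1} \leq r_i^{*}\, C$. Definition~\ref{def:add} then relates this ideal ratio to the ratio actually realised by the algorithm, $r_i^{*} \leq \Delta(x_i^G \mid X_{i-1}^G)/c_i + \epsilon_i(x_i^G)$. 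Substituting and using $\Delta(x_i^G \mid X_{i-1}^G) = F(X_i^G) - F(X_{i-1}^G)$ produces the \emph{affine recursion} $\delta_i \leq (1 - c_i/C)\,\delta_{i-1} + \epsilon_i(x_i^G)\,c_i$.

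The final step is to unroll this recursion. Writing $a_i = 1 - c_i/C \in [0,1]$, a short induction gives $\delta_L \leq \delta_0 \prod_{i=1}^L a_i + \sum_{i=1}^L \epsilon_i(x_i^G)\, c_i \prod_{j>i} a_j$. Since every $a_j \leq 1$, the trailing products are at most $1$, so the error terms accumulate only \emph{additively}, reproducing exactly the blue identification-error sum in the statement. For the homogeneous part I apply $1 - t \leq e^{-t}$ to obtain $\prod_i a_i \leq \exp(-\sum_i c_i / C) \leq e^{-1}$ once the budget is exhausted, together with $\delta_0 = F(X^*)$. Combining the two pieces and rearranging recovers $F(X^G) \geq (1 - e^{-1}) F(X^*) - \sum_i \epsilon_i(x_i^G)\, c_i$.

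The main obstacle is not the algebra but the budget-boundary bookkeeping underlying the clean $e^{-1}$ factor: the cost--benefit greedy only attains $\prod_i(1 - c_i/C) \leq e^{-1}$ when $\sum_i c_i$ actually reaches $C$, which in general requires the standard device of accounting for the element that first overshoots the budget, as in the partial-enumeration/online treatment of \citet{streeter2008online}. Making this precise, and verifying that the strict inequality in the statement survives (for instance via $1-t < e^{-t}$ for $t>0$), is where the argument needs care; by contrast the propagation of $\epsilon_i$ is benign, since each contraction factor is a convex weight bounded by one.
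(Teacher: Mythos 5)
Your proof is correct and follows essentially the same route as the source this paper defers to: the paper gives no proof of its own, citing Theorem~6 of Streeter and Golovin, whose argument is exactly your affine recursion $\delta_i \le (1 - c_i/C)\,\delta_{i-1} + \epsilon_i(x_i^G)\,c_i$ unrolled with $1-t \le e^{-t}$. The budget-boundary issue you flag as the main obstacle is resolved here not by overshoot accounting or partial enumeration, but by restricting the claim to the specific budget $T=\sum_{i=1}^{L} c(x_i^G)$ actually spent by the greedy strategy (see the ``Caveat theorem 1'' paragraph in the appendix): with $C$ read as that $T$, the exponent $\sum_i c_i/C$ equals $1$ exactly, and your argument closes with no further bookkeeping.
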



\paragraph{Greedy approximation} The greedy error arises from the fact that finding the right combination of samples under a cost-constrained objective can be shown to be NP-hard (variant of the knapsack problem~\citep{karp1972reducibility}).  However, if we assume that the set-utility function $F$ is sub-modular, the greedy approximation which sequentially selects points that maximize the utility-cost ratio defined in Section~\ref{sec:methods} is guaranteed to be within a reasonable distance from the true answer as per Theorem~\ref{theorem:greedy}. Sub-modular functions (definition \ref{def:sub}) are set functions with the special property that the value of adding an element to the set decreases as the size of the input set increases (diminishing returns). The conditional integrated variance is an example of a sub-modular monotone function (proof in~\citep{krause2008robust}).

\paragraph{Identification error} The identification error represents the uncertainty over the lengthscale which prevents the algorithm from selecting the greedy set $X^G$. As the posterior distribution over the lengthscale converges, the step-wise identification error converges to 0 since the correct model has been identified. The rate of convergence $\epsilon$ could be derived from the rate of convergence of the hyper-parameters of a GP under some assumptions~\citep{li2020Bayesian}. However, the greedy strategy does not take into account the improvement over the lengthscale's uncertainty. To illustrate this, we study the following example where we can either sample a point $x_0$ with a cost of $2$ or two points  $x_1,x_2$ with a cost of $1$ (illustrated in Figure \ref{fig:ex}). The greedy approximation compares the utility of $F(x_0)$ and $F(x_1,x_2)$ and makes the implicit assumption that the error on both actions are the same $2\epsilon(x_0) = \epsilon(x_1) + \epsilon(x_2)$. However, the error $\epsilon$ changes after the first time step due to the GP being fitted on new data and $\epsilon_1(x_2)$ may be smaller than $\epsilon_0(x_2)$ where $\epsilon_i$ denotes the error at time step $i$. We call $\Delta_i(x) = \epsilon_{i}(x) - \epsilon_{i+1}(x)$ the exploration bonus. This implies that MFED methods are underweighting lower fidelities by not taking into account the improvement over the lengthscale. One way to alleviate this issue, is to encourage exploration by selecting an upper bound on the posterior distribution over $\beta$. Practically, this can be done by increasing the confidence parameter $\nu$. Intuitively, the values of $\nu$ and $\Delta$ should be correlated, where, reduction in identification error are encouraged using the exploration parameter. Selecting a confidence parameter close to $\nu=0.5$ leads to a greedy behavior over the high fidelities as opposed to high $\nu$ which leads to more exploratory behavior. There exists a balance between two objectives; one, is learning the relationships between the fidelities and the second one, is maximizing the utility.





\section{Experiment}

We consider the case of applying MFED to an ice sheet model~\citep{arthern2015flow}. The goal is to predict the contribution to the sea level from a specified region of Antarctica using the WAVI simulator. However, due to the unknown nature of some of the forcing parameters, we produce a probabilistic output over a varying set of input parameters. Specifically, we analyze the behavior of WAVI when varying the melting rate under the floating part of the ice sheet (the ice shelves). The melting rate is obtained using a parametrization of the basal melt rate described in~\citep{favier2019assessment,holland2008response} (see Appendix~\ref{app:melt}).

\paragraph{Setup \& analysis} A pre-defined ensemble of runs over a varying range of melt rate parameters and resolutions (5 and 10km) were obtained. We analyze the behavior MF-UCB would have had in this constructed dataset. In Figure~\ref{fig:wavi}, we show that depending on some input parameters of WAVI (whether melt is allowed within partially floating cells), the lower fidelities can be either informative or not.  When partial melting is considered, it becomes significantly more cost-efficient to query 10km resolutions, even though the most information could be gained by querying the high fidelity. In contrast, when partial melting is not considered, querying lower resolutions is not favourable, as the model has identified that the fidelities do not share information. CIV ensures that the lower fidelity is discarded in this case, which is reflected in the cost-utility surface of Figure~\ref{fig:wavi}. Concretely, the shape of the utility surface is modulated by the lengthscale learned (details in Appendix~\ref{app:training}) by the GP (high lengthscale when partial melt is enabled, low otherwise). We also display the difference in utility surface when considering the cost-adjusted utility. The cost of running the higher resolution (5km) is approximately ten times more expensive than the $10$km one which scales the utility in favour of the $10$km resolution with partial melt. The computational costs of fitting the GP model is negligible because a 5km simulation can take $1$ week to obtain. 

\begin{figure}[htb]
    \centering 
\begin{subfigure}{0.3\textwidth}
  \includegraphics[width=\linewidth]{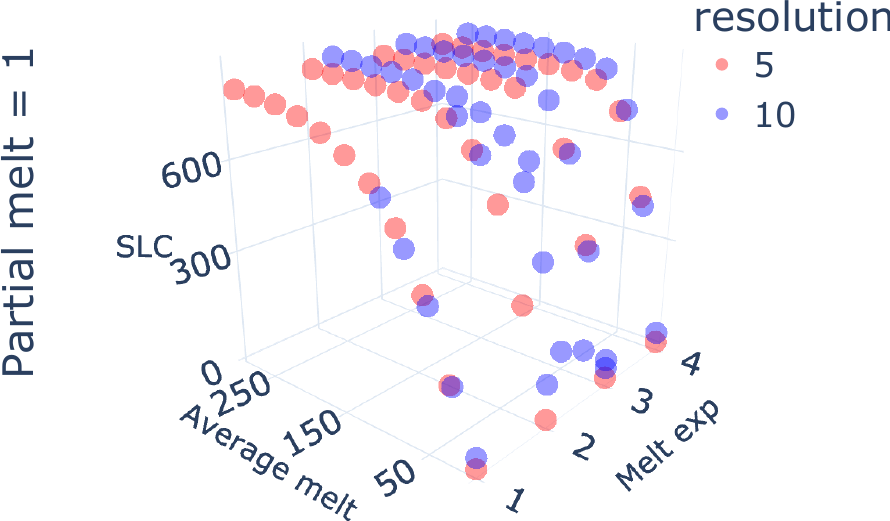}
  \label{fig:1}
\end{subfigure}\hfil 
\begin{subfigure}{0.3\textwidth}
  \includegraphics[width=\linewidth]{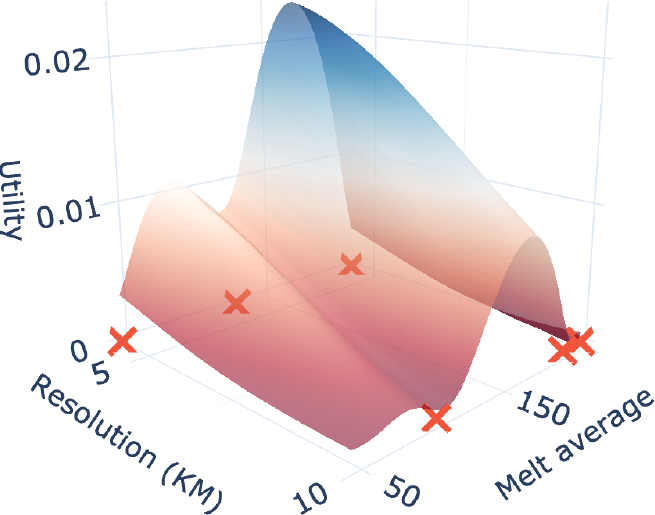}
  \label{fig:2}
\end{subfigure}\hfil 
\begin{subfigure}{0.3\textwidth}
  \includegraphics[width=\linewidth]{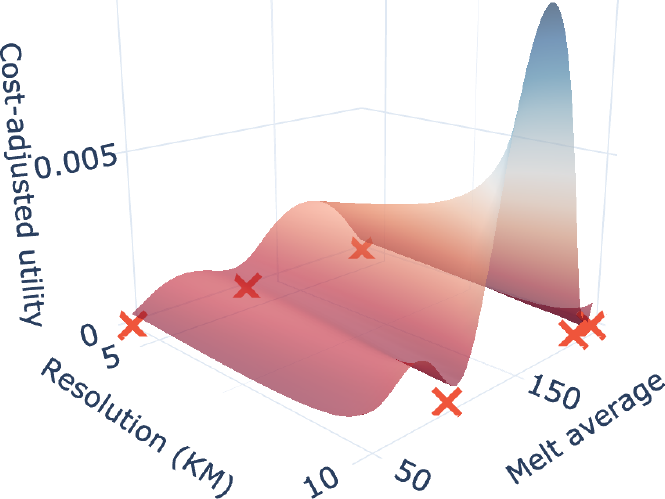}
  \label{fig:3}
\end{subfigure}
\begin{subfigure}{0.3\textwidth}
  \includegraphics[width=\linewidth]{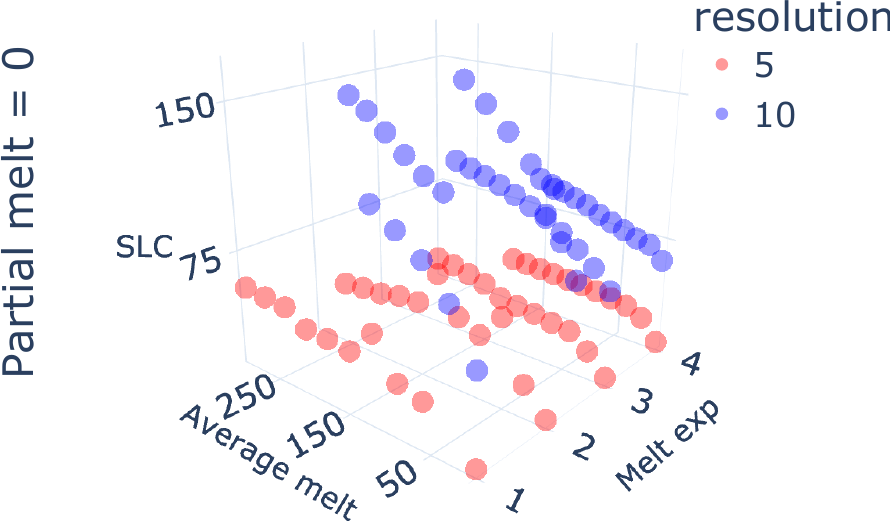}
  \label{fig:4}
\end{subfigure}\hfil 
\begin{subfigure}{0.3\textwidth}
  \includegraphics[width=\linewidth]{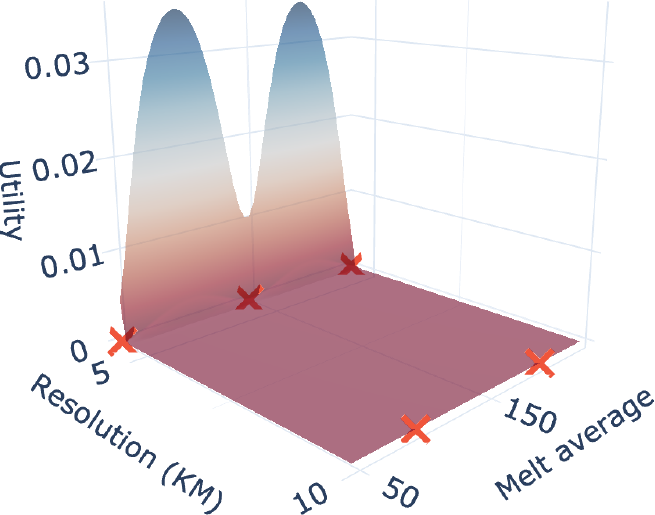}
  \label{fig:5}
\end{subfigure}\hfil 
\begin{subfigure}{0.3\textwidth}
  \includegraphics[width=\linewidth]{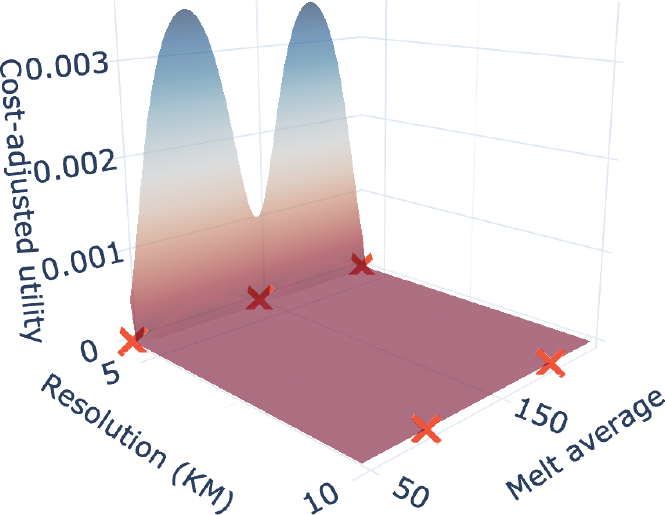}
  \label{fig:6}
\end{subfigure}
\caption{The left column shows the data distribution for varying average melt rate, melting exponent and resolution (5 and 10km). The middle column plots the variance reduction surface obtained when selecting a point after fitting a GP on 6 data points. The right columns displays the cost-adjusted utility. On the first row where partial melt is enabled, the lower resolution (10km) is informative and is a better cost-adjusted choice. On the contrary, when partial melt is disabled, the lower resolution is not informative and the algorithms learns to represent this aspect in the utility surface.}
\label{fig:wavi}
\end{figure}

\paragraph{Discussion} In this work, we investigate when the low-cost low-resolution simulations of an ice-sheet model can be used to predict the outcome of high-cost high-resolution runs. Our experiments summarized the information of the simulator into a single metric (SLC). However, this masks different ice-sheet behaviors that could prove informative. The next step of this project is to run the simulators based on the suggestions of MF-UCB on higher resolution (1-5km) and compare the performance of MFED using varying $\nu$ as well as other MFED methods. We hope to demonstrate in the future that MFED can be routinely incorporated in the glaciologists toolbox to obtain a cost-efficient probabilistic description of the change in sea level over a varying set of input parameters.  

\paragraph{Acknowledgements:} PT is funded by the Natural Sciences and Engineering Research Council and The Alan Turing Institute (AutoAI project). The compute to run this project has been generously provided by the British Antarctic Survey (Natural Environment Research Council) and Google Cloud (Climate Innovation Challenge). 


\bibliography{sample}
\clearpage
\section{Appendix}

\begin{figure}[h]
    \centering
    \includegraphics[scale=0.5]{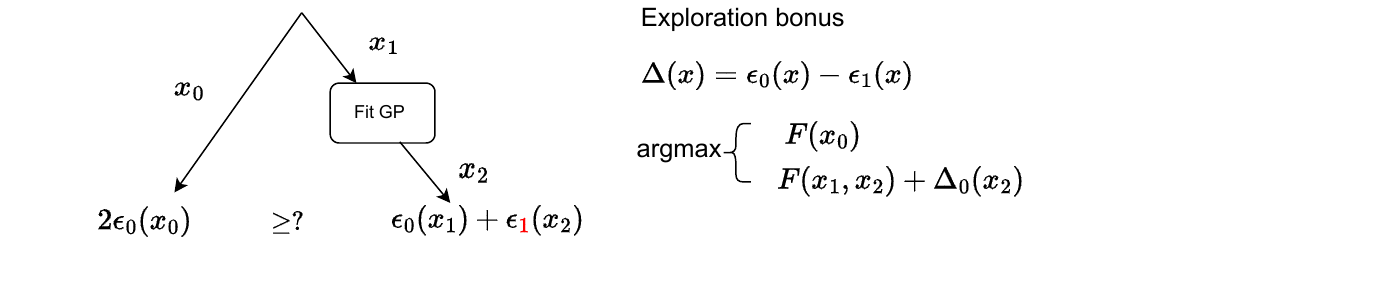}
    \caption{Example demonstrating the existing exploration-exploitation bonus.}
    \label{fig:ex}
\end{figure}

\subsection{Related work}\label{app:rel}

\paragraph{Experimental design:} There exists rich literature that studies the optimal placement of samples in experimental design~\citep{myers1972fundamentals,seltman2012experimental}. The two main strategies are to either use a space-filling design~\citep{joseph2016space} or a model-based strategy~\citep{geng2015gaussian}. With the space-filling strategy, the goal is to find a set of points that maximize the distance between points~\citep{viana2010algorithm}. In contrast, the model-based strategy casts a (probabilistic) model over the space studied and chooses points to minimize its uncertainty over the model~\citep{geng2015gaussian}. In model-based experimental design, the selection strategy is often divided into two phases. The first one focuses on finding points that are useful for learning the model. The second phase focuses on maximizing some defined utility functions such as the integrated variance.

\paragraph{Multi-fidelity experimental design:} Algorithm in multi-fidelity experimental can be divided into several categories based on whether they use a model and how the fidelities are analyzed. For example,~\citep{forrester2007multi} proposed a model-based two-stage process where several simulations are run at low fidelity and then use a sequential strategy to select high fidelity points. Another example is~\citep{xiong2013sequential} which uses hypercubes for sampling and doubles the fidelity until a satisfying criterion is reached. In this paper, we build our algorithm from the maximum rate of uncertainty reduction of~\citep{stroh2022sequential,villemonteix2009informational,bect2012sequential}. The main algorithmic difference lies in the fact that we select the lengthscale based on the upper bound $\nu$ of a learned distribution over the lengthscale. The justification for this modification comes from the fact that selecting the median lengthscale underestimates lower fidelities as explained in Section \ref{sec:theory}. Practically, this provides an exploratory phase into the algorithms that are often hard-coded by researchers. We also provide theoretical insights into the performance of the algorithm using sub-modular function. 

\paragraph{Multi-fidelity Bayesian optimization:} In multi-fidelity Bayesian optimization, the goal is to find the minima of a function by leveraging lower fidelities~\citep{song2019general,wu2020practical}. The concept of multi-fidelity is more popular in Bayesian optimization than in experimental design. The knapsack analysis for multi-fidelity while originating from experimental design has been used to understand multi-fidelity Bayesian optimization behavior. There have been several works using a cost-adjusted strategy in conjunction with continuous modelling over the computational axis~\citep{kandasamy2017multi,he2017optimization}. The main difference between BO and experimental design lies in the utility function maximized. In BO, the only thing that matter is the value of the output of the simulator. This is in contrast to experimental design which is agnostic to the output value of the simulator.  

\paragraph{Online learning:} Online learning~\citep{lattimore2020bandit} is a field that focuses on analyzing the behavior of decision-making strategies. The simplest setting considers comparing the cumulative reward of a discrete set of actions where the rewards are sampled from a Gaussian with an unknown mean. In this setting, several algorithms such as upper confidence bound~\citep{lai1985asymptotically} can be guaranteed to have an optimal regret. Optimality in this setting means that the regret grows at a logarithmic rate. The essential insight behind UCB is optimism in the face of uncertainty principle. Practically, this means that the algorithm will over-estimate arms with high uncertainty to encourage exploration. A similar strategy is used in our algorithm where the lengthscale is selected in an optimistic fashion (lower fidelity) until proven wrong. In future work we will investigate how similar optimality results can be obtained in our setting.

\paragraph{Sub-modular maximization:} The analysis performed in this field falls under the umbrella of combinatorial optimization. The main idea is to take variants of combinatorics problems such as knapsack and endow them with stronger assumptions (such as sub-modularity) to achieve reasonable approximations. The work developed by~\citep{krause2014submodular} is particularly relevant for two reason: one, they analyze a noisy setting which enables the incorporation of uncertainty over the lengthscale. Second, many of their applications are on experimental design which enables us to leverage some of their theoretical results concerning the integrated variance. Finally, several papers draw a similar connection to online learning~\citep{krause2008robust}. 

\subsection{Sub-modular function}

\begin{definition}[Sub-modular function]\label{def:sub}
    For every $X,Y \subseteq \Omega$ with $X \subseteq Y$ and $\forall x \in \Omega \setminus Y$, we have that 
    \begin{equation}
        F(X \cup x) - F(X) \geq F(Y \cup x) - F(Y)
    \end{equation}
\end{definition}
\begin{definition}[\cite{streeter2008online}]\label{def:add}
	For a cost function $c$ and a sub-modular function $U$ over a set $Z$, we define the additive error $\epsilon(x)$ as:
	\begin{equation}
	\epsilon(x) = \max_{y \in Z} \frac{F(X \cup y)}{c(y)} -  \frac{F(X \cup x)}{c(x)} 
\end{equation}
where $x,y \in Z$ are points and X is a set of points in $Z$.
\end{definition}

\paragraph{Caveat theorem 1}
The caveat in this bound is that it only applies to specific $T=\sum_i c(x_i)$ that we are given by applying the greedy strategy. Practically, the greedy strategy take one element at a time and add up the $c(x_i,\tau_i)$ giving us the T for which the theorem holds. In other words, we compare the performance of the greedy strategy after a total cost of $T$ with the optimal strategy given a budget of $T$. Extensions can be found in \citep{krause2008robust} to relax this assumption. 


\subsection{Ice-sheet simulator}\label{app:melt}
The simulator used in this paper is called WAVI (wavelet-based, adaptive-grid, vertically integrated ice sheet model)~\citep{arthern2015flow}. It is a vertically integrated, three dimensional ice sheet model which includes both the membrane stresses in the ice and the effects of vertical shear in order to simulate flow of both grounded and floating ice. A subgrid parameterisation is used to represent the movement of the grounding line (the boundary between the grounded and floating ice). Since ice retreat has shown to be sensitive to how melting is applied in cells containing the grounding line (e.g. Seroussi and Morlighem, 2018), a subgrid melting parameterisation is applied, where melt can be applied in proportion to the floating area of the cell (partial melt=1), or melt can only be prescribed in cells that are fully floating (partial melt=0). There is no calving in the WAVI model (the ice shelf front is fixed and cannot advance or retreat). For a complete introduction to the simulator refer to 
~\citep{arthern2015flow}. 

The model is initialised via data assimilation, inverting for basal drag and viscosity using accumulation, surface velocities and rates of change of surface elevation~\citep{arthern2015flow}. The data used to initialise the simulator for this experiment includes bathymetry from~\citep{fretwell2013bedmap2}, modified to account for data from~\citep{millan2017bathymetry}, DEM from Cryosat2~\citep{slater2018new}, updated from 2013 to 2017 using dh/dt, dh/dt from ICESat-2~\citep{smith2019land}, and surface velocities from annual MEaSUREs 16/17~\citep{mouginot2017comprehensive}. Other data sets are as in~\citep{arthern2015flow}. The initial state thus corresponds to the ice sheet in approximately 2017. For the simulations included in this experiment, we run the model only over the sector of the West Antarctic Ice Sheet that drains into the Amundsen Sea, as shown in Figure~\ref{fig:ice}. 


\subsubsection{Parameters of interest}
There exists many input parameters in the WAVI simulator that can be modified, such as parameters related to the prescription of basal melt rate, accumulation rate, and ice rheology. We restrict our analysis to a subset of parameters of major importance, namely those that control the basal melt forcing. In the absence of a coupled ice-ocean model, a basal melt parameterisation is used to prescribe the melt rate under the floating part of the ice sheet (the ice shelves). In these simulations, the basal melt is parameterised using an expression described in~\citep{favier2019assessment,holland2008response}, which is expanded to consider different values of the exponent, denoted $m$. The parameter $\gamma^T$ is calibrated to set the average basal melt at the start of the simulation, $t=0$. Simulations with different combinations of $m$, average melt and resolution are then run forward for $100$ years of simulation time, and the output that we aim to emulate is the total cumulative sea level contribution (SLC) after 100 years. Two sets of simulations are performed: one with no melt in partially floating cells (partial melt=0), and one with proportional melt in partially floating cells (partial melt=1), as described above.

\subsubsection{Details of simulation in Figure~\ref{fig:ice}}\label{app:plot}
In Figure~\ref{fig:ice}, the melt parameters are set as $m=2$, average melt at $t=0$ years is $54$ m/a, and there is no melt in partially grounded cells (partial melt=0). The thickness from the $10$ km simulation is interpolated onto the 3km grid, for comparison. In both plots, elevation contours for the 3km run at $t=100$ years are shown at $200$ metre intervals.

\subsection{Training GP}\label{app:training}
The kernel used is an RBF kernel with a lengthscale for each dimension of the data initialized at the value 1. We use Hamiltonian Monte Carlo (HMC) to learn a posterior distribution over the lengthscale parameter $\beta$. HMC is run for 500 steps and the last 250 are used to approximate the posterior distribution. For visualization purposes we train the lengthscale on only 6 datapoints which can result in somewhat unstable training. However, for the real application of MFED on ice sheet, the total number of runs will easily exceed 50. The codebase is developed using GPy~\citep{gpy2014} and Emukit~\citep{emukit2019}. We normalize the input and output of the simulator and add a small jitter to the resolution column to avoid numerical instabilities. We use an upper confidence bound $\nu$ of $0.9$.

\clearpage

\ifdraft
\section{Setup}


A similar problem is encountered in experimental design where exploratory phases are often included in the algorithm.  

\paragraph{Convergence of lengthscale in GP} 
\begin{itemize}
    \item Convergence of lenghtscale (under right assumptions)
    \item Reduction on upper bound on CIV because its a linear operator
    \item Leads to tighter upper bound
    \item Same strategy as upper-confidence bound
\end{itemize}

\subsection{Optimization of CIV and sub-modularity}
\begin{itemize}
	\item Present the base problem as a set function maximization. Then add costs and stochasticity. 
	\item Three challenge, we discuss them sequentially
	\begin{itemize}
		\item Combinatorial problem 
		\item Cost adjustment 
		\item Unknown stochastic reward
	\end{itemize}
\end{itemize}
\subsubsection{Without costs}

\begin{itemize}
	\item Sub modularity can help you there. Present the theory and prove its submodular
	\item Can get some bound on performance by being greedy 
	\item Theoratically you should be doing full rollouts for every possibility but too expensive. 
	\item Furthermore doesnt work because of uncertainty
	\item Can incorporate some rollouts but for now we shall focus on greedy approximations
\end{itemize}

Good review \citep{krause2014submodular}

In mathematics, a sub-modular function is a set function where the value of adding a point decreases as the size of the set increases. Practically, the uncertainty reduction per point decreases as the set of points increases. This special property enables relatively efficient algorithms to be developed for combinatorial problems. 

Informally, the goal of MFED can be viewed as selecting a set of points $S \in N$ that maximize the CIV. The budget constraint imposes a restriction on the cardinality of $|S|$. Selecting the optimal set of points is an NP-hard problem but due to the sub-modularity of CIV, the greedy strategy is a good approximation.  The greedy strategy consists in selecting the points with the largest CIV at each step. 

This issue is the same as the one explored in non-myopic Bayesian optimization where they attempt to improve on the greedy approximation.

\begin{definition}
A set function $F: 2^{\mathcal{V}} \rightarrow \mathbb{R}$ is sub-modular if and only if for all set $\mathcal{A} \subseteq \mathcal{B} \subseteq \mathcal{V}$ and $s \in \mathcal{V} \setminus \mathcal{B}$ it holds
\begin{equation}
	F(\mathcal{A} \cup \{x\}) - F(\mathcal{A}) \leq g(\mathcal{B} \cup \{x\}) - F(\mathcal{B})
\end{equation}
\end{definition}
This characterization of a set function was developed in \citep{nemhauser1978analysis}. A good discussion of its relationship with Gaussian process can be found in \citep{krause2008robust}.

\begin{proposition}
	The conditional integrated variance $U$ of a Gaussian Process is a submodular function. 
\end{proposition}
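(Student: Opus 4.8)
The plan is to exhibit $U$ as a nonnegative integral of pointwise variance-reduction functions and to use the fact that monotonicity and submodularity are both preserved under such integration, thereby reducing the claim to a single-target statement about Gaussian conditioning. Concretely, I would work with the set function whose marginal is $U$, namely the total reduction in integrated variance relative to the prior,
\[
F(X) = \int \bigl[\sigma_\emptyset(\theta,0) - \sigma_X(\theta,0)\bigr]\, d\theta,
\]
where $\sigma_X(\theta,0)$ is the GP posterior variance at the target location $(\theta,0)$ after conditioning on the observation locations in $X$ (observations may lie at any fidelity, but the integral ranges only over the slice $\tau=0$). Note $F(\emptyset)=0$, $F\geq 0$, and $U(x,X)=F(X\cup x)-F(X)$. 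For each fixed $\theta$ define the pointwise reduction $g_\theta(X) = \sigma_\emptyset(\theta,0) - \sigma_X(\theta,0)$, so that $F(X)=\int g_\theta(X)\,d\theta$.

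First I would record the preservation step: if every $g_\theta$ is nonnegative, monotone, and submodular, then $\int g_\theta\,d\theta$ inherits monotonicity and submodularity, since each of these properties is an inequality closed under nonnegative linear combinations and pointwise limits, hence under integration against the nonnegative measure $d\theta$. This is routine, but it is precisely what localizes the problem to a single output point and lets me forget the integral. I would then dispatch monotonicity of $g_\theta$ via the standard fact that a GP posterior variance never increases as observations are added, $\sigma_X(\theta,0)\geq\sigma_{X\cup x}(\theta,0)$, which is immediate from the Schur-complement form of the conditioning update.

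The substance is the pointwise submodularity, for which I would use the closed form of the incremental variance reduction obtained by conditioning the joint Gaussian on the current data $X$ and then adding $x$,
\[
\sigma_X(\theta,0) - \sigma_{X\cup x}(\theta,0) = \frac{\mathrm{Cov}\!\left(f_{(\theta,0)}, f_x \mid X\right)^2}{\mathrm{Var}\!\left(f_x \mid X\right)}
\]
(with the denominator gaining the observation-noise variance in the noisy case). Submodularity for target $\theta$ then amounts to showing that this marginal gain is non-increasing as the conditioning set grows from $A$ to any $B\supseteq A$, which I would establish from the way conditioning in a GP shrinks the conditional covariance $\mathrm{Cov}(f_{(\theta,0)},f_x\mid\cdot)$ relative to the conditional variance of $f_x$.

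The hard part will be exactly this last monotonicity of the marginal gain: it is well known that raw variance reduction can fail to be submodular for general correlated observations, so the argument genuinely relies on the Gaussian-process regularity (and the kernel/noise assumptions) rather than on conditioning alone. I would therefore either bound the above ratio on $B$ by its value on $A$ using the conditional-covariance decay induced by the extra observations in $B\setminus A$, or, failing a fully self-contained derivation, invoke the established result of \citep{krause2008robust} that the conditional integrated variance of a GP is monotone submodular. The remaining bookkeeping — checking that $\sigma_\emptyset$ is a finite constant so $F$ is well defined and that restricting the target integral to the fidelity slice $\tau=0$ while observations range over all $(\theta,\tau)$ does not affect the argument — is straightforward.
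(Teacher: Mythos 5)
Your proposal is correct and follows essentially the same route as the paper: the paper's own proof likewise reduces the claim to the pointwise fact that predictive-variance reduction is submodular, cited from \citep{das2008algorithms,krause2008robust}, and extends to the integrated objective via closure of submodularity under nonnegative sums/averages. If anything, your write-up is the more complete of the two, since you spell out the integration-preservation step that the paper leaves as an open question (``do we need to prove the integral is\ldots''), and you correctly flag that pointwise submodularity of variance reduction is not automatic for correlated observations --- which is precisely why the fallback to the cited result is genuinely needed rather than optional.
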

\begin{proof}
	In \citep{das2008algorithms,krause2008robust} they prove that the predictive variance at a point is a sub-modular function. They also prove the average sum is a submodular function. Do we need to prove the integral is or should we work with average variance?
\end{proof}

\begin{theorem}[\citep{nemhauser1978analysis}]
	 If F is  monotone and submodular, $\pi$ greedy policy, and $\pi^*$ is the optimal policy
\begin{equation}
	F(\pi) > (1-e^{-1})F(\pi^*)
\end{equation}

\end{theorem}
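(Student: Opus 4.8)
The plan is to prove the guarantee by tracking how much progress the greedy policy makes toward the optimum at each of its steps. Writing $k$ for the cardinality constraint implied by the budget, let $S_i$ denote the set of the first $i$ elements selected by $\pi$, so that $S_0 = \emptyset$ and $F(\pi) = F(S_k)$, and let $S^*$ be the optimal set with $F(\pi^*) = F(S^*)$. Assuming the standard normalization $F(\emptyset) = 0$, I would introduce the residual gap $\delta_i := F(S^*) - F(S_i)$ and aim to show it contracts geometrically, $\delta_{i+1} \leq (1 - 1/k)\,\delta_i$; the final bound then follows by unrolling this recursion.

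The core of the argument is a one-step progress lemma stating that the greedy marginal gain is always at least a $1/k$ fraction of the remaining gap, i.e.\ $F(S_{i+1}) - F(S_i) \geq \frac{1}{k}\bigl(F(S^*) - F(S_i)\bigr)$. To establish it I would first bound $F(S^*)$ from above: by monotonicity $F(S^*) \leq F(S^* \cup S_i)$, and I would expand $F(S^* \cup S_i) - F(S_i)$ as a telescoping sum in which the elements of $S^* \setminus S_i$ are adjoined to $S_i$ one at a time. Submodularity (Definition~\ref{def:sub}) lets me upper bound each telescoped term by the marginal gain of that single element relative to $S_i$ alone. Finally, the defining property of the greedy policy is that it selects the element of maximal marginal gain over $S_i$, so every one of these single-element gains is at most $F(S_{i+1}) - F(S_i)$. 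Since $|S^* \setminus S_i| \leq k$, summing the $\leq k$ terms yields $F(S^*) - F(S_i) \leq k\,\bigl(F(S_{i+1}) - F(S_i)\bigr)$, which is exactly the lemma.

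Rewriting the lemma as $\delta_{i+1} \leq (1 - 1/k)\,\delta_i$ and iterating from $i = 0$ gives $\delta_k \leq (1 - 1/k)^k\,\delta_0 = (1 - 1/k)^k\,F(S^*)$, after which the elementary inequality $(1 - 1/k)^k \leq e^{-1}$ delivers $F(\pi) = F(S_k) = F(S^*) - \delta_k \geq (1 - e^{-1})\,F(\pi^*)$. I expect the main obstacle to be the progress lemma itself: the telescoping decomposition must be arranged so that submodularity applies term by term, and one must verify that the greedy choice at step $i+1$ dominates the marginal gain of \emph{every} optimal element not yet selected. This is precisely where both hypotheses enter simultaneously --- monotonicity to pass from $F(S^*)$ to $F(S^* \cup S_i)$, and submodularity to compare the telescoped gains against the single-element gains. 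The remaining steps are routine, though I would double-check the boundary cases to confirm whether the stated strict inequality should in fact be non-strict.
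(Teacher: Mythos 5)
Your proposal is correct and follows essentially the same route as the source the paper relies on: the paper states this theorem by citation to \citep{nemhauser1978analysis} without reproducing a proof, and your argument --- the one-step progress lemma $F(S_{i+1}) - F(S_i) \geq \frac{1}{k}\bigl(F(S^*) - F(S_i)\bigr)$ obtained from monotonicity, a telescoping sum over $S^* \setminus S_i$, submodularity, and the greedy choice, followed by unrolling the geometric recursion --- is precisely the classical Nemhauser--Wolsey--Fisher proof. Your closing caution is also warranted: your route gives the non-strict bound $F(S_k) \geq \bigl(1 - (1-1/k)^k\bigr) F(S^*) \geq (1 - e^{-1}) F(S^*)$, and the strict inequality as stated in the paper holds only when $F(S^*) > 0$, since $(1-1/k)^k < e^{-1}$ for every finite $k$ but equality $F(S_k) = F(S^*) = 0$ is possible otherwise.
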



\subsubsection{Cost-adjusted decision making}

\begin{itemize}
	\item The previous section didnt include the case where sampling each point has a cost associated to it. Question is how to reason optimally about this ?
	\item Discuss how it bounds your utility surface and make the multi fidelity problem more interesting
	\item Most used strategy is cost-adjusted 
	\item Pretty good strategy as can be witnessed by the theorem + give an example of why it works
\end{itemize}

Based on the GP models and utility function defined in the previous section, we can derive a utility surface (see Figure \ref{fig:utility}) and optimize over it. However, optimally allocating resources in this context is an ill-posed problem due to the lack of computational constraints (ie the optimal answer is always the most expensive computations). In this section, we incorporate the bounded resource constraint and demonstrate how it curves the utility surface (see Figure \ref{fig:utility}) based on the underlying decision process.

\paragraph{RL} The reason RL notation doesn't make sense is because the concept of a state is just all the elements (set function). 

\begin{figure}%
    \centering
    \subfloat[][\centering Utility surface]{\includegraphics[scale=0.3]{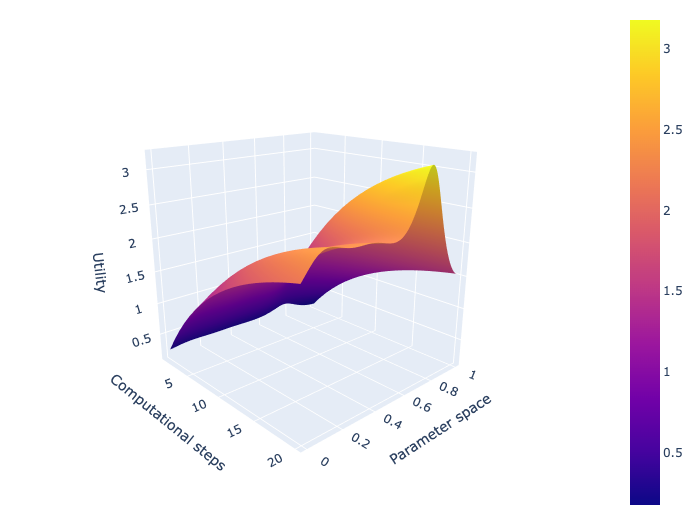}}%
    \qquad
    \subfloat[][\centering Utility surface with bounded resources.]{\includegraphics[scale=0.3]{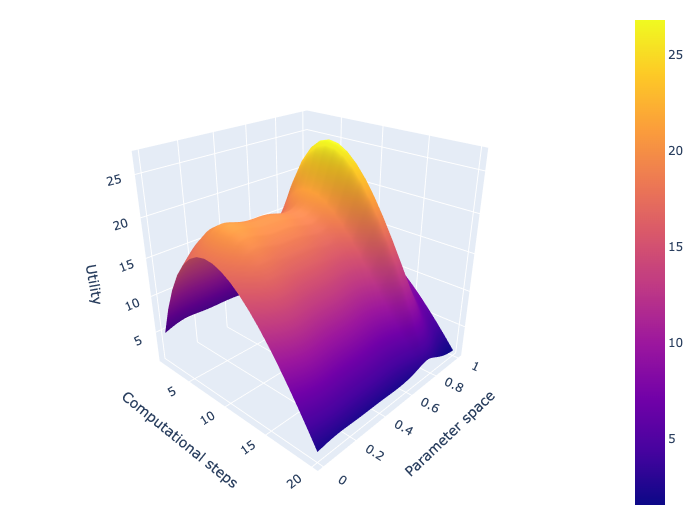}}%
    \caption{To be filled}%
    \label{fig:utility}%
\end{figure}

For the sake of demonstration, let's assume that we a budget N and a space of action $\mathbb{A}$. A common strategy is to calculate 
\begin{equation}
    \frac{u(\theta_i,\beta_i)}{c(\beta_i)}    
\end{equation}
If we assume that $u$ does not change through the decision process (which is false), it is possible to prove that this is an optimal strategy. This is because the utility is a submodular function (see \citep{srinivas2009gaussian} for a similar argument).
In practice the utility surface will change based on the previously selected points. However, if the parameters of the model do not change it is possible to guarantee that
\begin{equation}
    u_n(\theta_i,\beta_i) >= u_{n+1}(\theta_i,\beta_i) \quad \forall n,i
\end{equation}

where $n$ represents the $n$'th decision. If this statement holds, it should also be possible to prove the optimallity of the cost-adjusted strategy.

\begin{theorem}[\citep{golovin2011adaptive}]
	Fix any $\alpha \geq 1$. If F is  monotone and submodular, $\pi$ is a cost-adjusted $\alpha$-approximate greedy policy where each elements is selected by maximizing $\frac{U(x)}{c(x)}$, and $\pi^*$ is the optimal set
\begin{equation}
	F(\pi) > (1-e^{-\frac{1}{\alpha}})F(\pi^*)
\end{equation}
\end{theorem}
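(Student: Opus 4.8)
The plan is to prove the statement by the classical potential-function argument for greedy maximization of a monotone submodular function under a knapsack (budget) constraint, adapted to the $\alpha$-approximate oracle. Throughout I assume the usual normalization $F(\emptyset)=0$, which is harmless since the utility $U$ (reduction in integrated variance) vanishes on the empty set. Write the greedy set as a nested chain $\emptyset = S_0 \subset S_1 \subset \dots \subset S_L = \pi$, let $g_{i+1}$ denote the element added at step $i+1$ with cost $c_{i+1} = c(g_{i+1})$, and let $\Delta(x \mid S) = F(S \cup x) - F(S)$ be the marginal gain. Following the caveat noted after Theorem~\ref{theorem:greedy}, I fix $T = \sum_{i=1}^L c_i$ to be the total cost actually spent by the greedy run and take $\pi^*$ to be optimal among all sets of cost at most $T$, so that $\sum_{x \in \pi^*} c(x) \le T$.

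First I would establish the master inequality relating the optimum to the greedy marginals. By monotonicity, $F(\pi^*) \le F(\pi^* \cup S_i)$, and telescoping the marginals along $\pi^* \setminus S_i$ together with submodularity (Definition~\ref{def:sub}) gives
\begin{equation}
F(\pi^*) - F(S_i) \le \sum_{x \in \pi^* \setminus S_i} \Delta(x \mid S_i).
\end{equation}
Next I would convert each marginal into a marginal-per-unit-cost and invoke the $\alpha$-approximate greedy property: since $g_{i+1}$ maximizes $\Delta(\cdot \mid S_i)/c(\cdot)$ up to a factor $\alpha$, every $x$ satisfies $\Delta(x \mid S_i)/c(x) \le \alpha\, \Delta(g_{i+1}\mid S_i)/c_{i+1}$. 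Substituting and using $\sum_{x\in\pi^*} c(x) \le T$ yields
\begin{equation}
F(\pi^*) - F(S_i) \le \alpha\, T\, \frac{\Delta(g_{i+1}\mid S_i)}{c_{i+1}},
\end{equation}
which rearranges to the one-step contraction $\Delta(g_{i+1}\mid S_i) \ge \frac{c_{i+1}}{\alpha T}\bigl(F(\pi^*) - F(S_i)\bigr)$.

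Then I would set up the residual recursion. Writing $\rho_i = F(\pi^*) - F(S_i)$ and using $\rho_{i+1} = \rho_i - \Delta(g_{i+1}\mid S_i)$, the previous bound gives $\rho_{i+1} \le \rho_i\bigl(1 - \tfrac{c_{i+1}}{\alpha T}\bigr)$. Telescoping over the $L$ greedy steps and applying $1 - z \le e^{-z}$ to each factor gives
\begin{equation}
\rho_L \le \rho_0 \prod_{i=1}^{L}\Bigl(1 - \frac{c_i}{\alpha T}\Bigr) \le \rho_0 \exp\Bigl(-\frac{1}{\alpha T}\sum_{i=1}^{L} c_i\Bigr) = \rho_0\, e^{-1/\alpha},
\end{equation}
since $\sum_i c_i = T$ by construction. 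With $\rho_0 = F(\pi^*)$ this is exactly $F(\pi^*) - F(\pi) \le e^{-1/\alpha} F(\pi^*)$, i.e. $F(\pi) \ge (1 - e^{-1/\alpha}) F(\pi^*)$; strictness follows because $1 - z < e^{-z}$ whenever some $c_i > 0$.

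I expect the main obstacle to be the budget-boundary bookkeeping rather than the submodular algebra. The telescoping identity $\sum_i c_i = T$ is clean only because of the caveat convention (comparing greedy after spending $T$ against the optimum constrained to budget $T$); a fully general statement against the optimum at a fixed external budget $C$ must separately account for the greedy element that overshoots $C$, typically through a \emph{last-step} argument or a fractional-element comparison. The second delicate point is reconciling the multiplicative $\alpha$-approximate oracle used here with the additive identification error $\epsilon$ of Definition~\ref{def:add} used in Theorem~\ref{theorem:greedy}: one must argue that the lengthscale uncertainty degrading the greedy choice can be absorbed into the factor $\alpha$, which is precisely where the exploration parameter $\nu$ enters.
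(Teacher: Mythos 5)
The paper never actually proves this statement: it is quoted by citation from \citep{golovin2011adaptive} (and sits in a draft-only appendix), so there is no in-paper argument to compare against. Your proof is correct, and it is essentially the standard potential-function argument underlying the cited results of \citep{streeter2008online,golovin2011adaptive}: the monotonicity/submodularity master inequality $F(\pi^*)-F(S_i)\le\sum_{x\in\pi^*\setminus S_i}\Delta(x\mid S_i)$, conversion to per-unit-cost gains via the $\alpha$-approximate ratio rule, the one-step contraction $\rho_{i+1}\le\rho_i\bigl(1-\tfrac{c_{i+1}}{\alpha T}\bigr)$, and $1-z\le e^{-z}$. You also handle the one point where such proofs usually go wrong: the ratio-greedy rule has no constant-factor guarantee against an optimum at an arbitrary external budget $C$, and you correctly adopt the paper's ``caveat'' convention (compare greedy after spending $T=\sum_i c_i$ against the optimum constrained to budget $T$), which is exactly what makes the telescoped product collapse to $e^{-1/\alpha}$. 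Two trivial loose ends, neither fatal: the strict inequality fails in the degenerate case $F(\pi^*)=0$ (the theorem as stated is equally loose), and your ratios require costs to be strictly positive, not merely non-negative as assumed in Section~\ref{sec:methods}. Your closing remark is also apt: the paper's Theorem~\ref{theorem:greedy} absorbs the oracle error additively via Definition~\ref{def:add} rather than multiplicatively via $\alpha$, and relating the two error models is precisely what would be needed to make the two bounds interchangeable; the paper itself never does this.
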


\emph{However, this theoretical guarantee relies on the additive nature of value in the knapsack problem. In contrast, in BO, the value obtained from multiple evaluations is the maximum of the values of the evaluations, not their sum. Indeed, we show that in this setting the “value divided by cost” approach can perform arbitrarily worse than the optimal policy.}

\subsection{Uncertainty-aware cost-adjusted strategy}

\begin{itemize}
	\item Dig into the theoretical reason as to when cost is not working. Ie uncertainty over objective and submodularity
	\item Rollouts dont necessarily fix this problem because doesnt reduce uncertainty over objective 
	\item Present uncertainty aware cost adjusted strat
	\item Importantly, this forces you to consider worst case scenario 
	\item Explore exploit trade-off 	
	\item Depends on what you optimize for. Worst case scenario, average case ect... 
\end{itemize}

The complexity arises if the model between fidelity's is not know and the parameters evolve between each decision. In particular, if for some samples equation 2 does not hold then the cost-adjusted strategy can be sub-optimal. Specifically the risk could be to select an expensive fidelity due to its cost-adjusted performance when an unknown lower fidelity is as good. 

The solution is to use some kind of optimisim in the face of uncertainty idea. Assume the lower fidelity are as good as the higher one until proven wrong. This leads to a first exploratory phase which is in line with many works.

There exists some uncertainty over the parameters of your model which creates some uncertainty over the rewards.  In our case because the parameters of the model keep changing, the reward of selecting a point keeps evolving until the end (if the model is within the class considered). You could make an assumption that on average the reward you observe are normally distributed around the real one ? Then that would lead you to UCB? 

If you look at the uncertainty over the reward (uncertainty reduction) which depends on the prob distribution over parameters of the model you can then apply UCB style algorithm to take the most optimistic parameter. This will lead to exploration of low-fidelity first. 

\paragraph{Uncertainty over the objective function:} Its quite a cool perspective because the uncertainty over the objective function comes from the definition of the objective function on the parametrized model. 

\paragraph{Explore-exploit trade-off}

Need to show how there is value in exploring in our case. Maybe with the tree and an example. 

Lets view the problem as selecting an arm where the reward is the value divided by the costs. However, we can change action at each time step as opposed to comitting to a single one. The difference between an expensive and a cheap action becomes that we dont observe a new point and so our confidence on $\beta$ doesnt decrease. We can link our confidence about $\beta$ to our probabilistic worst case scenario and so our bound.

\paragraph{Related method} In non-myopic BO, to calculate the horizon length of the rollout, they use a model based on the error of the GP. This enable for longer rollout in cases where the GP is confident. A similar strategy could be developed in our case? 


\paragraph{Rollouts dont work} because you dont get to observe the y's and fit the lengthscale. So you're stuck with the same lengthscale and you can maybe only improve on knapsack approximation.

\begin{itemize}
	\item "Near-optimal Nonmyopic Value of Information in Graphical Models"
\end{itemize}

\subsection{Algorithm}

\section{Theory}


\begin{itemize}
	\item Goal is to maximize a probabilistic worst case version (why?)
	\item Start from cost adjusted  approximation. 
	\item The goal is to cast the error in the alpha approximation. 
	\item Get alpha from convergence rate of lengthscale. 
\end{itemize}

\paragraph{Challenges:} We have three main challenges: 1) The stochasticity of the objective function, 2) the sub-modularity of the objective and 3) the cost adjusted aspect.

We can decompose $\epsilon$ into two components: 1) is the error linked to the uncertainty over the lengthscale 2) is the error linked with chosing a sub-optimal action. Optimizing may depends on the horizon as well. This is where UCB being an anytime algorithm is interesting. We now care about characterizing the rate of convergence of CIV to be able to do UCB style algorithms. 

Strategy
\begin{itemize}
	\item For any sample, get a bound on the reduction of uncertainty on CIV. This gives basically an expected gain for "every informative sample"
	\item Assume each sample gives you same amount of information.
	\item The bound should depend on the horizon. Ie if this is your last step, exploration has no value.
	\item All the arms have the same uncertainty so UCB style dont work ? Thats actually not true some arms have significantly less uncertainty (higher fidelity). When taking the value of an arm, need to take that into consideration. Luckily it happens to coincide with learning more lengthscale. So maybe UCB work... 
\end{itemize}

\subsection{Unknown lengthscale error}

\subsubsection{GP convergence rate}
most relevant paper for now \citep{li2020bayesian}. It is a generalization but I could also use more specific results.  

\begin{proposition}
	For a specific kernel and sampling scheme 
\begin{equation}
	p(\beta_0 - \beta_n > 0) < \epsilon 
\end{equation}
For a given $\epsilon$ we want to find a convergence rate for $\Delta_l^n = C_n - C_{n-1}$
\end{proposition}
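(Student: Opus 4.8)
The plan is to treat the conditional integrated variance $C_n$ as a functional of the lengthscale $\beta$ and to propagate the posterior contraction rate of $\beta$ through the smoothness of this functional. Writing $C_n(\beta) = \int \sigma_{X_n}(\theta,\comp=0;\beta)\, d\theta$, the quantity $\Delta_l^n = C_n - C_{n-1}$ mixes two effects: the genuine variance reduction from the observation added at step $n$, and the change induced by updating the selected lengthscale from $\beta_{n-1}$ to $\beta_n$. The strategy is to isolate the second effect, bound it by a Lipschitz constant times $|\beta_n-\beta_{n-1}|$, and then control $|\beta_n-\beta_{n-1}|$ using the assumed one-sided concentration $p(\beta_0-\beta_n>0)<\epsilon$.

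First I would invoke the posterior contraction result of \citep{li2020bayesian} to turn the one-sided statement into an explicit high-probability envelope: with probability at least $1-\epsilon$ the selected (upper-quantile) lengthscale obeys $|\beta_n-\beta^*|\leq\delta_n(\epsilon,\nu)$ for a rate $\delta_n\to0$ fixed by the kernel and the sampling scheme. A triangle inequality then gives $|\beta_n-\beta_{n-1}|\leq\delta_n+\delta_{n-1}=O(\delta_{n-1})$ on this event.

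Second, I would show that $C_n$ is Lipschitz in $\beta$ on the relevant compact domain. For the exponential-quadratic family the posterior variance $\sigma_{X_n}(\theta,\comp=0;\beta)$ is a smooth function of $\beta$ through the kernel matrix $K(\beta)$ and its inverse, and differentiating via $\partial_\beta K^{-1}=-K^{-1}(\partial_\beta K)K^{-1}$ yields a pointwise bound on $|\partial_\beta\sigma|$. Since the integral defining $C_n$ is a linear operator, this transfers directly: $|\partial_\beta C_n|\leq\int|\partial_\beta\sigma_{X_n}(\theta,\comp=0;\beta)|\,d\theta\leq L$. Combining the two steps, on the high-probability event the lengthscale-driven part of $\Delta_l^n$ is at most $L\,|\beta_n-\beta_{n-1}|=O(\delta_{n-1})$, so the exploration bonus inherits the GP hyperparameter contraction rate.

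The hard part will be controlling the Lipschitz constant $L$ uniformly. The factor $K(\beta)^{-1}$ and its derivative blow up as sampled points cluster or as $\beta$ shrinks, so the pointwise bound degrades with the conditioning of $K(\beta)$; the jitter added to the resolution column (Appendix~\ref{app:training}) regularises this, but only with an $L$ that grows as the jitter shrinks, making the constant sampling-dependent. A secondary difficulty is cleanly separating the data-driven from the lengthscale-driven contribution to $C_n-C_{n-1}$, which I would handle by the decomposition through $C_{n-1}(\beta_{n-1})$, $C_{n-1}(\beta_n)$ and $C_n(\beta_n)$, bounding the intermediate differences separately: the lengthscale-driven difference $C_{n-1}(\beta_n)-C_{n-1}(\beta_{n-1})$ by the Lipschitz estimate and the data-driven difference $C_n(\beta_n)-C_{n-1}(\beta_n)$ by the monotone submodular reduction guaranteed for CIV \citep{krause2008robust}. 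Making $L$ explicit for the specific kernel and sampling scheme, rather than merely finite, is what would upgrade the statement from a qualitative claim to a genuine convergence rate.
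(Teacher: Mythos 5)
Before comparing: the paper contains no proof of this proposition. It sits in an unfinished draft section, and the surrounding notes only point at the contraction literature (\citep{li2020bayesian}, \citep{wang2021inference}, \citep{karvonen2022asymptotic}) without carrying out any argument. So your proposal must stand on its own, and on its own it has two genuine problems. The first is an identification error about what $C_n$ is. You read $C_n$ as the conditional integrated variance, but the paper's follow-up proposition makes the intended meaning clear: ``when adding a new sample the confidence bound decreases by $\Delta_l^n$, which then leads to the following reduction on CIV,'' the latter governed by a \emph{separate} factor $\Delta_s$. That is, $\Delta_l^n = C_n - C_{n-1}$ is the per-step shrinkage of the (one-sided) confidence bound over the lengthscale posterior itself --- the event $\beta_0 - \beta_n > 0$ is a coverage failure of the selected upper quantile --- and the transfer from lengthscale uncertainty to CIV uncertainty is the content of the \emph{next} proposition. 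Your Lipschitz machinery (differentiating through $\partial_\beta K^{-1} = -K^{-1}(\partial_\beta K)K^{-1}$, linearity of the integral, the three-term decomposition through $C_{n-1}(\beta_{n-1})$, $C_{n-1}(\beta_n)$, $C_n(\beta_n)$) is the right tool for that second step, and your observation that the Lipschitz constant degrades with the conditioning of $K(\beta)$ is honest and correct; but it answers a different question than the one posed here.

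The second problem is the more serious one, because it affects the step you treat as given. You propose to ``invoke the posterior contraction result of \citep{li2020bayesian}'' to obtain $|\beta_n - \beta^*| \le \delta_n(\epsilon,\nu)$ with $\delta_n \to 0$. That is precisely the step the paper itself flags as suspect: the design lives on a compact domain (bounded parameter ranges, resolutions between 1 and 10\,km), so the relevant regime is fixed-domain asymptotics, and for kernels of the exponential-quadratic family the lengthscale is in general \emph{not} a microergodic parameter --- Gaussian measures with different $\beta$ can be mutually absolutely continuous, in which case no estimator, Bayesian or frequentist, is consistent for $\beta$. The paper's own note quotes exactly this obstruction (``if there is a nonmicroergodic parameter, then [it] cannot be consistently estimated''). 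Contraction results of the kind you cite require increasing-domain sampling, or a restriction to microergodic functionals of the hyperparameters, or strong conditions on the design; none of these is available here, and the sampling scheme of MF-UCB is moreover \emph{adaptive} (greedily chosen points), which standard posterior-contraction theorems do not cover. So your high-probability envelope $\delta_n$, and the triangle inequality $|\beta_n - \beta_{n-1}| \le \delta_n + \delta_{n-1}$ built on it, are unsupported as stated. To repair the argument you would need either (i) an explicit assumption restricting attention to an identifiable (microergodic) combination of variance and lengthscale, in the spirit of \citep{karvonen2022asymptotic,hadji2021can}, or (ii) a change of target: prove contraction of the posterior over $\beta$ to its (possibly wrong) limit rather than to a ``true'' $\beta^*$, which is all the one-sided statement $p(\beta_0 - \beta_n > 0) < \epsilon$ actually demands, and then track how the selected upper quantile shrinks under the specific sampling scheme. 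Making that choice explicit is the actual mathematical content the proposition is asking for.
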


Using $\Delta_l^n$ we should be able to bound the following:
\begin{proposition}
	For a given epsilon we are interested in the reduction of uncertainty on the CIV by adding a new sample to estimate the lengthscale. When adding a new sample the confidence bound decrease by $\Delta_l^n$ which then leads to the following reduction on CIV
\begin{equation}
	|U_{\beta_0}(x) - U_{\beta_{n-1}}(x) | \leq \Delta_s |U_{\beta_0}(x) - U_{\beta_{n}}(x) | \quad \Delta_s \leq 1
\end{equation}

To calculate the relationship between $\Delta_s$ and $\Delta_l$ we can use the decomposition of the predictive variance and its integration. 

\end{proposition}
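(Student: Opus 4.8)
The plan is to reduce the claim to a smoothness estimate of the utility $U_\beta(x)$ in the lengthscale $\beta$, and then to combine it with the contraction of the lengthscale confidence bound $\Delta_l^n$ supplied by the preceding proposition. First I would rewrite the CIV utility using the rank-one variance-reduction identity for Gaussian processes: adding a single point $x$ lowers the posterior variance at any high-fidelity test location $(\theta,\comp=0)$ by the closed-form amount $\big(\mathrm{cov}_X((\theta,0),x)\big)^2/(\sigma_X^2(x)+\sigma_n^2)$, so that
\begin{equation}
  U_\beta(x) = \int \frac{\big(\mathrm{cov}^{(\beta)}_X((\theta,0),x)\big)^2}{\sigma^{2,(\beta)}_X(x)+\sigma_n^2}\, d\theta,
\end{equation}
where the superscript $(\beta)$ records that every posterior quantity is built from the exponential-quadratic kernel with lengthscale $\beta$. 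This is exactly the ``decomposition of the predictive variance and its integration'' the statement alludes to, and it isolates the single place where $\beta$ enters.

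Second, I would show that the integrand is continuously differentiable in $\beta$ away from kernel degeneracy: the kernel entries $k_\beta$ are smooth in $\beta$, the Gram-matrix inverse is smooth wherever $K+\sigma_n^2 I$ is invertible (guaranteed by the jitter term used in the implementation), and the ratio is smooth because its denominator is bounded below by $\sigma_n^2>0$. Differentiating under the integral sign, justified by dominated convergence on the bounded design region over which $\theta$ ranges, yields a bound $|\partial_\beta U_\beta(x)|\le L(x)$ uniform on the confidence interval for $\beta$. The mean value theorem then gives the two-sided estimate
\begin{equation}
  \ell(x)\,|\beta_0-\beta| \;\le\; |U_{\beta_0}(x)-U_\beta(x)| \;\le\; L(x)\,|\beta_0-\beta|,
\end{equation}
where $\ell(x)$ is a lower bound on $|\partial_\beta U_\beta|$ over the same interval.

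Third, I would plug in the confidence-bound contraction. Reading the claimed inequality in the convergent direction, i.e. the error at step $n$ contracting relative to step $n-1$, and writing $\Delta_l^n$ as the ratio of the lengthscale confidence width across one additional sample, the estimate above converts lengthscale accuracy into utility accuracy: dividing the upper bound at step $n$ by the lower bound at step $n-1$ gives
\begin{equation}
  \frac{|U_{\beta_0}(x)-U_{\beta_n}(x)|}{|U_{\beta_0}(x)-U_{\beta_{n-1}}(x)|} \;\le\; \frac{L(x)}{\ell(x)}\,\Delta_l^n \;=:\; \Delta_s,
\end{equation}
so that $\Delta_s\le 1$ precisely when the confidence bound shrinks faster than the conditioning ratio $L/\ell$ of the utility in $\beta$. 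This makes the relationship between $\Delta_s$ and $\Delta_l$ explicit and yields the stated contraction.

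The hard part will be the lower Lipschitz bound $\ell(x)>0$. The upper bound is routine, but $U_\beta(x)$ can be locally flat in $\beta$ -- for instance when $x$ shares little information with the high-fidelity slice $\comp=0$, so that $\partial_\beta U_\beta$ passes through zero -- and there $\ell(x)=0$ breaks the ratio. I would handle this by restricting the contraction to the informative regime (points whose utility is bounded away from its stationary values on the confidence interval), or by replacing the multiplicative statement with the additive bound $|U_{\beta_0}(x)-U_{\beta_n}(x)|\le L(x)\,\mathrm{width}(C_n)$, which feeds directly into the identification-error term $\sum_i \epsilon_i(x_i^G)c(x_i^G)$ of Theorem~\ref{theorem:greedy} and sidesteps the need for a genuine contraction. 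A secondary difficulty is making the differentiation-under-the-integral step uniform in $\beta$ over the whole confidence interval rather than at a single point, which requires dominating the integrand's $\beta$-derivative by a fixed integrable envelope.
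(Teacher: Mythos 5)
Your proposal cannot be checked against the paper's own proof for a simple reason: the paper does not contain one. This proposition sits in an unfinished draft appendix, and the entirety of the paper's argument is the closing sentence of the statement itself --- that the relationship between $\Delta_s$ and $\Delta_l$ should come from ``the decomposition of the predictive variance and its integration.'' Your first step (the rank-one variance-reduction identity for the posterior variance, integrated over $\theta$) is exactly that decomposition, so your route is the one the authors intended, carried considerably further than they took it.

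The substantive issue is the one you flag yourself, and it is a genuine gap rather than a technicality: the mean-value argument converts a shrinking lengthscale confidence interval into a shrinking utility error only if the lower bound $\ell(x)>0$ holds, and $\partial_\beta U_\beta(x)$ can vanish on the confidence interval (the utility can be flat in $\beta$, e.g.\ when $x$ carries little information about the $\comp=0$ slice), at which point the ratio $L(x)/\ell(x)$ blows up and no multiplicative contraction with $\Delta_s\le 1$ follows. So the proposition as stated is not provable by this route without an extra ``informative regime'' hypothesis, and your fallback --- the additive bound $|U_{\beta_0}(x)-U_{\beta_n}(x)|\le L(x)\,\mathrm{width}(C_n)$, which feeds directly into the identification-error term of Theorem~\ref{theorem:greedy} --- is the statement that can actually be defended. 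The paper's own (draft) definition $\Delta_l^n = C_n - C_{n-1}$ is a difference of confidence bounds, not a ratio, which also fits an additive statement more naturally than the multiplicative one you reconstructed. Finally, note that the displayed inequality has its indices in the wrong order for the narrative it tells: as printed it asserts the step-$(n-1)$ error is dominated by $\Delta_s \le 1$ times the step-$n$ error, i.e.\ that the error grows as samples are added. You corrected this silently by reading it ``in the convergent direction''; any final write-up should make that correction explicit.
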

To plot: The impact on CIV  near the highest fidelity will almost be zero because the uncertainty over the utility converge to 0 as we get closer to the highest fidelity.

If we are only estimating a microergodic parameter than we should be able to get a bound for MLE of the sort
\begin{equation}
	\mathcal{I}_n(\theta_0)^{\frac{1}{2}} (\theta_n-\theta_0) \rightarrow \mathcal{N}(0,I)
\end{equation}

\href{https://arxiv.org/pdf/2001.10965.pdf}{here} Theorem 4.9 ? \citep{wang2021inference}

Those 2 papers seem to be the closest in terms of convergence rate in bounded space with exponential kernel \citep{karvonen2022asymptotic,hadji2021can}

\paragraph{STRANGE BEHAVIOR:} There exists two kinds of parameter microergodic and non-micro ergodics. Bascially if you need to estimate several parameters including both types then it becomes an ill defined problem and your convergence to the optimal value will not hold because the gaussian measure defined are equivalent. However, we dont care about the actual value we just care about uncertainty bounds... 

This means we may not find the right GP but an equivalent ont.  Thats why people say optimization of likelihood is non-convex.  I need to put an explicit assumptions on the parameters estimated. 

\emph{Note that under fixed-domain asymptotics, if there is a nonmicroergodic parameter, then 6 cannot be consistently estimated and we should generally expect (7) to be false.}

\subsubsection{Convergence CIV}
To do so, we first take an MLE estimate of the lengthscale and fit a GP to the data. Then for a prior distribution over $\beta$, we analyze the induced distribution over $U_{\beta})(x)$. If we select an upper bound over $\beta$ we should be able to bound the

\begin{equation}
	  \int \sigma_{\hat{f}|x,\hat{y}} (\theta,\comp=0) d \mu(X)
\end{equation}

\begin{equation}
	\sigma_{f}(X^*) = k(X^*,X^*) - K(X^*,X)[k(X,X)+\sigma^2I]^{-1}K(X,X^*)
\end{equation}

\paragraph{Lemma:} The distribution of CIVAR is sub-gaussian. The variance is bounded between 0 and $\sigma$ so the integral is bounded by ... Which means its subgaussian and we can bound the tail and derive algorithm for it.

\subsection{Extra}

\paragraph{Other version:} If we assume we have the right $\beta^*$ and choose greedily then its a reasonable approximation due to the submodularity of the problem. NP-hard problem. 

\paragraph{Stochastic submodular:} Same except the noise decreases across the space for every sample we draw. This should lead to faster convergence rate.

\paragraph{Regret bound:} Can we develop regret bound for this sub-modular stochastic problem? \citep{foster2021statistical}. Statistical estimation of the problem should be faster because you get info about variance based on every sample you query. Its similar to a bandit with a single noise variable.

\href{https://www.ma.imperial.ac.uk/~cpikebur/papers/opstok.pdf}{good paper}
TODO

\href{http://www.cs.cmu.edu/~anupamg/papers/krause-robusttr.pdf}{GOLDEN PAPER}

\paragraph{Discretization error}
\citep{stein1999interpolation} p 162
Regret bound on discretization error (chapter 4 \href{https://arxiv.org/pdf/1904.07272.pdf}{here})

\paragraph{Knapsacks error}
Originally I thought this hsould be modelled as a stochastic knapsacks. but actually the simulator is deterministic its jut we dont know its true value. So I think its better to model with $\alpha$ approximation as detailed below. If we work with stochastic simulator then we should switch to stochastic knapasack. They also mention if you have the wrong prior over the stochasticity its like being alpha approximated. Same as me.

All that you need for stochastic cost adjusted knapsacks should be here \citep{golovin2011adaptive}. You can use $\alpha-$ approximate framework which says you cant actually find the optimal answer but only something that is $\alpha$ away. If you put your stochasticity over lengthscale in there and you can get a probabilistic bound.

\section{Experiments}

\subsection{List of things to try}
\subsection{Analytical benchmark \& benchmark? }
\subsection{Ice-sheet simulator}

\begin{figure*}[h]
	\includegraphics[scale=0.5]{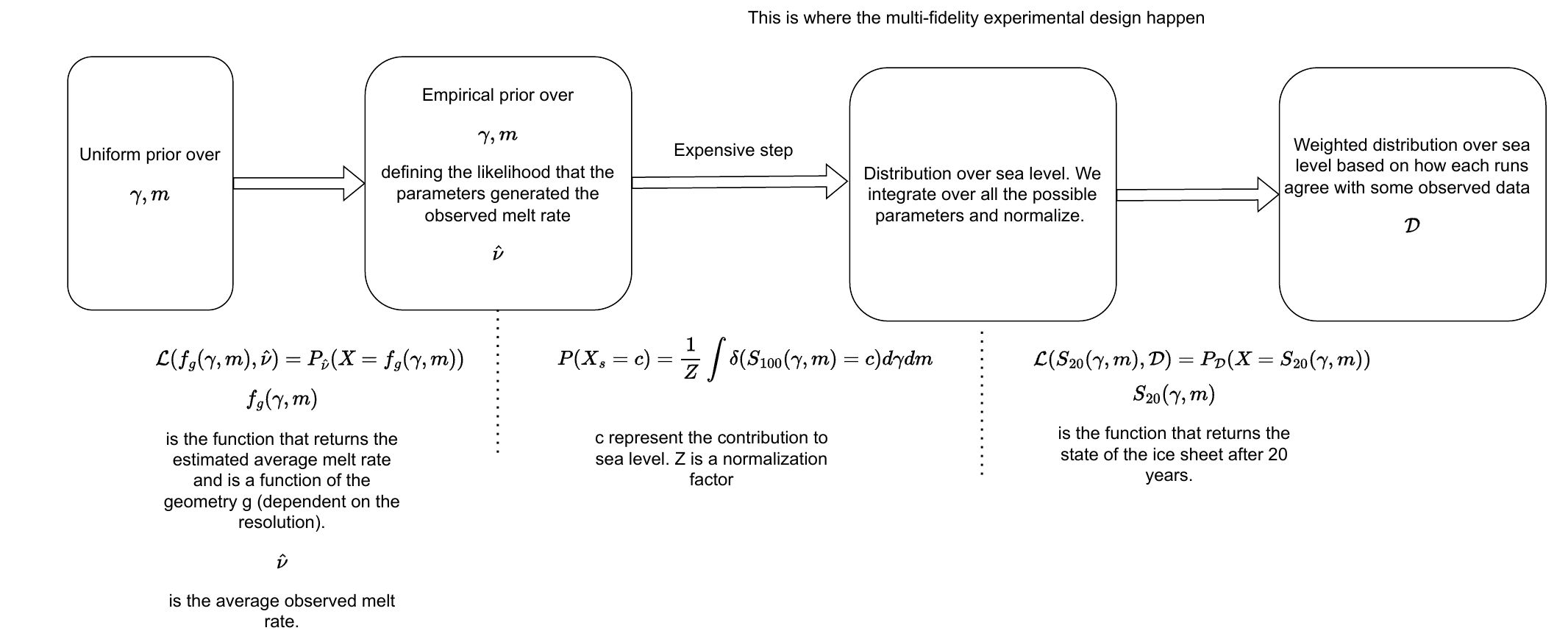}
\end{figure*}
Calibration of the pdf by only considering the runs that fit the observed data after 20 years.

Weight each run based on how likely it is (according to the true data).

Multi-fidelity BO + entropy search \citep{takeno2020multi}. 
Calculate uncertainty reduction for the model proposed in SUR \citep{stroh2022sequential}.

\section{Discussion}
Issues with this method:
\begin{itemize}
	\item Constant lengthscale across surface
\end{itemize}

\section{Extra stuff}

\ifdraft
\begin{itemize}
	\item If we have a prior over the lengthscale can view it as a return  distribution for each point
	\item As we query points this uncertainty will be shrinking
	\item Uncertainty closer to the highest fidelity is also narrower 
	\item Define the true objective as regret over the true lengthscale
	\item Different assumptions (prior) or worst case vs average case will lead to different algorithm
\end{itemize}

 We have some uncertainty over the objective function and need to balance optimizing the objective and learning more about the objective. This is a similar problem to online learning with some slight different characteristics. The kernel of a GP is parametrized by $\beta$ (lengthscale). This means that the objective function U is dependent on $\beta$. Learning the lengthscale over the fidelity space is a crucial element of MFED.
 
 We define the parameter dependent utility $U_{\beta}(x)$ and call the true utility $U^*=U_{\beta^*}$ where $\beta^*$ represent the true lengthscale. 

 If we assume a prior distribution over $\beta$, then the complexity of the problem increase because it is not simply an optimization of an objective as we do not know which objective is the correct one. 
\fi 
\subsection{Summary}

Bayesian experimental design \citep{chaloner1995bayesian} shares a similar motivation than uncertainty reduction and present the problem as maximizing some utility function 
\begin{equation}
	U(d) = \int_y \text{max}_{\nu} \int_\theta U(d,\nu,y,\theta) p(\theta|d,y)p(y|d) d\theta dy
\end{equation}

d represents the design chosen (point), y the outcome of the design, $\theta$ the parameter of interest and $\nu$ a decision after the outcome has been observed. In our setting of experimental design, there are no decision being taken after chosing the design d. 

The utility function is chosen based on the problem at hand. The most classical setup in model-based experimental design is to select a point that optimize the expected information gain 
\begin{equation}\label{eq:EIG}
	\text{EIG}(d) = \expect_{p(y|d)} [H(p(\theta)) - H(p(\theta|y,d))]
\end{equation} 

H the entropy of the distribution. Another way of phrasing it is to maximize the KL-divergence between the prior and posterior distribution. 

When working with Gaussian Process, the majority of works take the following \emph{discrete approach}. Among a large set of points across your domain, what is the set of points that maximize the entropy of the covariance matrix generated by the kernel. This leads to finding point that maximize the following:
\begin{equation}\label{eq:log_det}
	\frac{1}{2} \text{log} | \mathcal{I} + \sigma^{-2} K_a |
\end{equation}

The log determinant is a monotonous function which means the optimal answer is the entire set N considered. Adding a constraint on the size of the set leads to an NP-hard problem \citep{ko1995exact}, but a reasonable approximation can be obtained by maximizing equation \ref{eq:log_det} one sample at a time. This is equivalent to finding the points with the higher variance $\sigma$. 

Maximizing the entropy of the set of points can be interpreted as reducing the uncertainty over the remaining part of the space which is the end goal. Intuitively this could be interpreted as minimizing the uncertainty over the distribution of function but the objective is phrased in quite a different way. In this section we will attempt to reconciliate both perspectives. 

It is interesting to note how this strategy deviates from the usual theoretical framework developed in experimental design, or at least, the equivalence between both problem is not evident. We now attempt to formally describe how to go from a general objective such as expected information gain to a strategy to maximize the log determinant presented earlier. 

The first thing to notice with \ref{eq:EIG}, is that we are working with probability over functions (gaussian process). This add some complexity to the equations as the definition of entropy for a stochastic process is different. This has to do with the fact that there does not exists a lebesgue measure on an infinite dimensional banach space (function space).  (need to add some math + citation here). Specifically, the usual alternative is to study the entropy rate of a stochastic process defined as
\begin{equation}
	\widehat{H}(X) = \text{lim}_{n \rightarrow \infty} \frac{1}{n}H(X_1,...,X_n)
\end{equation}

For stochastic process, the goal is now to maximize 
\begin{equation}
	\text{EIG}(d) = \expect_{p(y|d)} [\widehat{H}(\mathcal{X}) - \widehat{H}(\mathcal{X}(y,d))]
\end{equation}

The entropy rate of a stochastic process can be defined based on its spectral density
\begin{equation}
	\mathcal{X} = \frac{1}{2} \text{log}(2\pi\text{e}) + \frac{1}{4\pi}\int_{-\pi}^{\pi}\text{log}(f(\lambda) d\lambda
\end{equation}

In the multi-fidelity context 
\subsection{Old summary}
We have a model (a stochastic process) and we are trying to reduce our uncertainty over that model. The uncertainty can be expressed as the entropy (rate) of the GP. This is equivalent to maximizing KL to uniform process. Equivalent to maximizing the information gain per point chosen. 

Approximate this quantity by looking at the variance of the GP and finding the element that minimize this. The approximation is "variance proportional to entropy rate." In the finite index case both are a fixed relationship: ie entropy = some constant function of variance. 

In 1D, take a few points on the graph, any linear combination of them give you a new point on the 1d line. This new point is normally distributed (1D). The entropy of all of those point is the variance integrated because in 1D entropy = variance. 

\paragraph{Claim:} Finding the point that minimize the mean variance is equivalent to minimizing the entropy rate of a stochastic process.
 
 Question now is how to estimate the entropy rate of a stochastic process: \citep{feutrill2021review} THIS LOOKS REALLY GOOD.
 
 Historically in the literature they just look at making a set of point as maximizing some discrete entropy. But in the case of multi-fidelity it doesnt really work because need to estimate the impact of a point on a lower fidelity on the entropy of the process. 
 \citep{ko1995exact} talk about finding a set that maximize the entropy of a gaussian distribution. Relate in terms of the determinant of the matrix. Thats not exactly the real goal.

\begin{equation}
	 \text{lim}_{n\rightarrow \infty} \frac{1}{n}H(X_1,X_2,...,X_n) \equiv \text{lim}_{n\rightarrow \infty} \frac{1}{n} \sigma (X_n)
\end{equation}

\paragraph{Discrete case in practice:}

For a set 
The first is the entropy of a multivariate gaussian = entropy rate of the GP = integrated variance. 

\paragraph{Thoughts:} argmax of my goal can be approximated with argmax of the variance.  
\paragraph{Thing to check:} In the 2010 baye opt paper, they cite another paper for the info gain equivalence with variance. Need to check in the paper if they actually mentioned this equivalence and whether I could reuse it. The way it works is as follows: you are trying to find a set that maximize the 

\paragraph{thoughts:} Can develop suvlinear regret bound and even gp?
\subsection{Uncertainty reduction}
The strategy in stepwise uncertainty reduction is to construct a sequence $X_1,X_2,...,X_n$ of evaluation points on $f$ such as to minimize the residual uncertainty of a quantity of interest.  

\begin{equation}
	\int p(y|\theta,0) d \theta
\end{equation}

Great presentation of the method \url{https://www.tandfonline.com/doi/pdf/10.1080/00401706.2013.860918?needAccess=true}. First I introduce the uncertainty reduction then talk about equivalence with information gain and entropy minimization.

\paragraph{Proof idea:} We want to prove that minimizing the integrated variance when sampling a point is equivalent to minimizing the entropy rate of the process we are considering. The entropy rate can be linked to the spectral density. The integrated variance can also be link to the spectral density. Need to show there is a constant equivalence between both. 

Some interesting ref \href{https://arxiv.org/pdf/1503.00021.pdf}{here} and \href{https://www.google.com/url?sa=t&source=web&rct=j&url=https://www.cs.cmu.edu/~epxing/Class/10708-15/slides/andrewgp2.pdf&ved=2ahUKEwjU18Ps0_X4AhWiQkEAHVRlAJcQFnoECAYQAQ&usg=AOvVaw0FSgebUZ7QDlwsHHfB8mZy}{here}
\href{https://epubs.siam.org/doi/epdf/10.1137/130928534}{This} paper links the integrated variance with the spectral decomposition. 
\subsection{Experimental design}

When working with Gaussian Process there is direct relationship between uncertainty (variance) and entropy. Experimental design with gaussian process can then be viewed as a special case of uncertainty reduction with a specific quantity of interest. 

One technical difficulty is that Gaussian process can be viewed as a probability over a function space $p(f)$, however, they do not admit an entropy. The equivalent is for stochastic processes is called the entropy rate. I suspect that deriving the expected information gain using the entropy rate would yield a similar algorithm to the one developed in uncertainty reduction where the quantity of interest is the variance of the model. The technical complexity arises from the fact that distances between stochastic process is still an area of active research. However, for most practical purposes this won't affect our results. 

\paragraph{Set of maximum information gain:} Problem is NP hard from \citep{ko1995exact} but a good solution can be approximated by iteratively selecting point with maximum variance. Mathematically this can be proven as being the point that maxmize the determinant of the subset matrix chosen.  In our case, without uncertainty over the objective, the utility is also submodular and so the single step approximation should hold. 

\subsection{Finite case}
Take a set of N points (large) from a GP which gives you a multivariate normal with an entropy. The goal is to find the subset of point S with the largest entropy. This effectively maximize the information gain which minimize the uncertainty of your model (entropy rate?). This can be aproximated by iteratively adding the point with the largest variance aka det at each step. Now for the conditional with multi-fidelity 
\subsection{Equivalence with entropy minimization}

In GP case, entropy is a special case of uncertainty reduction. Uncertainty reduction gives you more control but equivalence doesnt necessarily hold for non gaussian stuff. I am not sure how information gain applies on stoch process. My guess is entropy is entropy rate and somehow doing the calculation land on the discrete version (add link francisco). 

Concretely this can be viewed as minimizing the entropy rate of the stochastic process. For a stochastic process $\{X_i\}$ we define the entropy rate of the process as: 

\begin{equation}
    \text{lim}_{n\rightarrow \infty} \frac{1}{n}\sum_i H(X_i)
\end{equation}

If you condition a GP at a set of points $\{X_i\}$, their distribution is a multivariate normal distribution whose entropy can be written as:
    
If you fix a set of index, then the \emph{discrete} entropy can written as

\paragraph{Proper discussion of KL between processes \citep{matthews2015sparse} 
:} The main idea is we define the goal as minimizing some KL between both process and in practice minimizing this is equivalent to reducing uncertainty. 

The goal is to maximize the information gain or alternatively to maximize the distance to the uniform distribution. Maximizing the information gain is constrained on all the existsing point hence why the optimal answer is also the ones that maximize distance from uniform distribution (submodular).  I think both goals are equivalent. 

When calculating the variance of GP we are approximating this entropy and picking the one as far as possible from the uniform.


\paragraph{Strategy:} Goal is KL over stoch process. In the non-MF case this boils down to looking at upper bound of variance. In MF case for each point need to calculate the impact over the upper bound and approximate it. Then can incorporate the prior and decision making ? 
\paragraph{Alternative objective:} One alternative would be to only consider parts of the space that have a value over a certain threshold (sea level). We could incorporate that aspect in a later stage.

\subsubsection{Uncertainty over the objective}
\paragraph{Summary:}

\paragraph{UCB style:}
\begin{itemize}
	\item The mean is the uncertainty reduction over the conditional when averaged over the prior over $\beta$. 
	\item The variance is if you take the 95 percent of the volume on the prior you get a max value for the lengthscale of x. Optimisim in the face of uncertainty. 
	\item Need to balance exploration and exploitation. Exploration to learn relationship between fidelities but also exploitation because the goal is to maximize utility draw. Don't know how good a sample is until we sampled lots of sample across the space. 
	\item Need to reduce uncertainty over the prior and maximize objective. Explore exploit because over the prior doesnt really give you anything but it guides you. 
	\item If you do that itll have a tendancy to sample in a straight line for a fixed theta to get the lengthscale right. Is this an alright strategy ? Probably. 
\end{itemize}

How to learn relationship between fidelities ? Do you need to have another process to learn relationship across fidelity ?

\begin{itemize}
	\item If you just optimize over the mean + variance, the process will first query very low fidelity and then slowly higher ones as the uncertainty gets reduced. This is because the uncertainty around the prior will converge. 
	\item If your model assume same lengthscale over the space. Then querying two points at higher fidelity or lower fidelity is as instructive. 
	\item This means the optimal strategy is to start low and slowly increase the fidelity because you wouldnt assume major change right at the highest fidelity.
	\item Maybe thats wrong though.. 
	\item Select the point to maximize the reduction in uncertainty over lengthscale parameter. Probably want to be as uniform as possible ? 
\end{itemize}

\paragraph{Lemma: }If we assume we place a Gaussian prior over the lengthscale $\beta$ and view this problem as a continuous space multi armed bandit (ref), the reward of each arm is normally distributed. 
This probably wont work becauase of the inverse operator in the definition of the variance. 

\paragraph{Thoughts:} We can derive the convergence rate of the distribution over $\beta$ is we assume a gaussian model. Using this convergence rate we can define an approximation error for UCB + discretization + knapsacks. All of those will depend on the horizon, the assumptions on the speed of convergence of the lengthscale and the costs function. Because the horizon is limited we will talk about approximation error. Need to do better than uniform? 

\subsubsection{Dynamic programming formalism}
We define a state $s_i = \s$
The real cost utility should be calculated as follows: 
\begin{equation}
    V(s_i|N,D) = u(s_i) + \text{max}_{s_{i+1}} V(s_{i+1} | N-c(s_i), D U s_i)
\end{equation}
\fi
\end{document}